\begin{document}

\title{Learning to agree over large state spaces\footnote{I thank Hannu Vartiainen for invaluable support and supervision. I also thank Hannu Salonen, Mark Voorneveld, Klaus Kultti, Aviad Heifetz, Mats Godenhielm and two anonymous reviewers for helpful comments that greatly improved the paper, and the audiences of the 41st Annual Meeting of the Finnish Economic Association, the 15th European Meeting on Game Theory, Bayes By the Sea Conference 2019 and the Lisbon Meetings in Game Theory and Applications 2019. Financial support from the OP Group Research Foundation, the Yrj\"{o} Jahnsson Foundation, the Finnish Cultural Foundation and the University of Helsinki is gratefully acknowledged. All errors are mine.}}

\author{Michele Crescenzi\\michele.crescenzi@helsinki.fi}
\affil{\small University of Helsinki and Helsinki Graduate School of Economics, Finland}

\date{January 2022\\
\vspace*{10pt}
\textcolor{red}{This version is published in the \textit{Journal of Mathematical Economics} (2022), DOI: \url{https://doi.org/10.1016/j.jmateco.2022.102654}}}

\newtheorem{proposition}{Proposition}
\newtheorem{theorem}{Theorem}
\newtheorem*{theorem*}{Theorem}
\newtheorem*{definition*}{Definition}
\newtheorem{assumption}{Assumption}
\newtheorem{definition}{Definition}
\newtheorem{lemma}{Lemma}
\newtheorem{claim}{Claim}
\newtheorem{remark}{Remark}
\newtheorem*{remark*}{Remark}
\newtheorem{corollary}{Corollary}

\maketitle
\onehalfspacing

\begin{abstract}
We study how a consensus emerges in a finite population of like-minded individuals who are asymmetrically informed about the realization of the true state of the world. Agents observe a private signal about the state and then start exchanging messages. Generalizing the classical model of rational dialogues of \cite{geanakoplos1982} and its subsequent extensions, we dispense with the standard assumption that the state space is a probability space and we do not put any bound on the cardinality of the state space itself or the information partitions. We show that a class of rational dialogues can be found that always lead to consensus provided that three main conditions are met. First, everybody must be able to send messages to everybody else, either directly or indirectly. Second, communication must be reciprocal. Finally, agents need to have the opportunity to engage in dialogues of transfinite length.

\bigskip

KEYWORDS: agreement theorem, common knowledge, consensus, learning, rational dialogue
\bigskip

JEL CLASSIFICATION: C70, D82, D83

\end{abstract}

\newpage

\section{Introduction}

This is a paper on common knowledge acquisition and consensus. We are interested in the following problem. Consider a situation of incomplete information with finitely many agents. Each agent privately observes a partitional signal, whose realization depends on the true state of the world. Afterwards, agents can exchange messages, whose content depends on the private information each agent has. What are the conditions under which everyone eventually sends the same message? In other words, what are the conditions under which a consensus eventually emerges? And when a consensus is attained, is it common knowledge that everyone agrees?

Originally investigated by \cite{geanakoplos1982}, the problem of rational dialogues leading to consensus has been studied in two distinct, yet not mutually exclusive, settings. First, in probabilistic environments like \cite{geanakoplos1982}, the state space is a probability space and the messages exchanged are posterior probabilities about a fixed event. Second, in more general environments like \cite{bacharach1985} or \cite{krasucki1996}, the state space is not necessarily a probability space, and the messages exchanged are values of some abstract function. In addition, agents' information partitions are assumed to be finite or, as in \cite{parikh1992infinite}, countably infinite.

Our goal is to study rational dialogues and consensus in a strictly more general setting in which the state space is just an arbitrary set, the messages exchanged are values of some abstract function, there are no bounds on the cardinality of the state space itself or the information partitions, and the length of a dialogue can be any ordinal number. Differently put, we study the emergence of consensus starting from a minimalist setting where we have a non-empty set as the state space, an information partition for each agent, and a well-defined function whose values are communicated in an arbitrarily long dialogue. We assume that agents communicate truthfully, do not tell lies and are not driven by strategic motives. The reason for exploring such a generalization is that, strictly speaking, the concepts of knowledge, common knowledge and consensus do not call for a probabilistic structure of the state space. Nor do they require a particular bound on the cardinality of the state space itself or the information partitions. Of course, limitations of this kind can be needed in particular cases, for example when the dialogue is about posterior probabilities. Nonetheless, such limitations are not necessary to formulate the general question of when a dialogue between rational agents leads to consensus.

Our contributions are the following. First, we provide a general framework for the study of rational dialogues that induce a consensus. This framework consists of finitely many agents, an arbitrarily large state space, an abstract function whose value are communicated in dialogues of transfinite length, and private communication. Previous results in the literature can be obtained as particular cases of our model. Second, we give sufficient conditions for the emergence of consensus. More specifically, provided that the function whose values are communicated is sufficiently regular, a consensus emerges if the communication protocol has two properties. Everyone should be able to talk to everyone else during a dialogue, either directly or indirectly. And communication should be reciprocal: if agent $i$ sends her message to $j$ at some point during a dialogue, then $j$ has to send a message back to $i$. These two conditions on the protocol are a natural generalization of those already provided by \cite{krasucki1996} for finite models. Finally, we write the model, and prove the main results, in such a way to exploit the lattice structure of the set of all partitions of the state space. The standard approach in the literature is to take a communication protocol as a primitive. Any such protocol induces a graph over the set of agents, and conditions ensuring consensus are found by studying the properties of this graph. Our approach takes the opposite route. The primitive object is a graph which describes who talks to whom during one round of communication. This graph induces a self-function in the set of profiles of information partitions: intuitively, it maps the information that agents have at the beginning of the communication round to the refined information they have after having talked. By iterating this function, we can generate a sequence of profiles of partitions that captures all the information generated during a dialogue. In so doing, we can work directly in the complete lattice of information partitions and prove convergence to consensus with relative ease.

As we mentioned, we allow dialogues to have transfinite length. The reason is that, when the state space is infinite, agents might have to exchange infinitely many messages before reaching a commonly known consensus. This point is illustrated in the example in Section \ref{sec:Example}. Now, the question arises: How to make sense of a dialogue whose length is indexed by some infinite ordinal? From a mathematical point of view, the concept of an infinitely long dialogue is fully legitimate and well-defined. Moreover, such dialogues have already been studied in \cite{parikh1992infinite} and \cite{aumann-hart2003}. The first paper explores common knowledge acquisition and consensus in a framework akin to ours, but with only two agents and a countably infinite state space. The other paper studies games played at the end of an infinitely long dialogue whereas we do not explore game-theoretic situations. Nonetheless, we are aware that the concept of a transfinite dialogue may sound fanciful and we hope the following points will partly overcome skepticism. First, we emphasize that we do not aim for practicality in this paper. Rather, our goal is to explore the formal structure of dialogues that, at least in principle, are conducive to a commonly known consensus. Second, dialogues of any length can be seen as interactive inferential processes, in which everyone draws conclusions from what everyone says or does, everyone draws conclusions from everyone's conclusions drawn from what everyone says or does, and so on \textit{ad infinitum}. Under this interpretation, we can phrase the key question of our paper as follows: How long should an interactive inferential process be in order for rational people to attain a commonly known consensus? An analogous question has been studied in game-theoretical contexts. Specifically, \cite{lipman1994} shows that in some games it is possible to attain common knowledge of rationality only through a transfinite process of elimination of never-best replies. Finally, we do not claim that transfinite dialogues are necessary in every possible situation covered by our model. As we show in the example in Section \ref{sec:Example}, transfinite dialogues become necessary when the message function through which agents communicate has a low expressive power and, consequently, not much information can be learned at each stage of the dialogue.

The rest of the paper is organized as follows. The next section contains an example of a rational dialogue in which it takes infinitely many stages to reach a commonly known consensus. The model is presented in Section \ref{sec:Model1} and the results are illustrated in Section \ref{sec:Results1}. Finally, in Section \ref{sec:Disc} we discuss the related literature and some of the assumptions we make in our analysis.

\section{Example}\label{sec:Example}
Consider a situation of incomplete information. There are three persons: Ann ($A$), Bob ($B$) and Carol ($C$). The set of possible states of the world is the set of natural numbers $\mathbb{N}$. The true state cannot be observed directly. Each person has access to a partitional signal whose realizations are privately observed. The resulting information partitions are as follows.
\begin{align*}
g^0_A &= \left\lbrace \{1,2,7\}, \{3,4\}, \{5\} \right\rbrace \cup \left\lbrace \{6+4k_1, 9+4k_1\}:k_1\geq 0\right\rbrace \cup \left\lbrace \{8+4k_2, 11+4k_2\}:k_2\geq 0\right\rbrace\\
g^0_B &= \left\lbrace \{1+4k_1, 2+4k_1\}:k_1\geq 0\right\rbrace \cup \left\lbrace \{3+4k_2, 4+4k_2\}:k_2\geq 0\right\rbrace\\
g^0_C &= \left\lbrace \{1+4k_1:k_1\geq 0\}, \{2+4k_2:k_2\geq 0\}, \{3+4k_3:k_3\geq 0\}, \{4+4k_4:k_4\geq 0\}\right\rbrace.
\end{align*}

A graphical representation is provided in Figure \ref{fig:Example}.

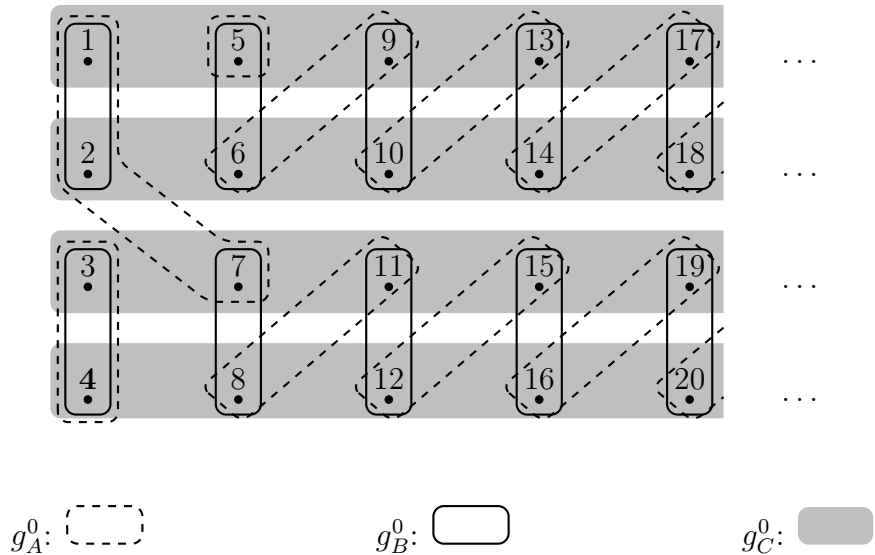
\begin{figure}[h]
\centering
	\begin{tikzpicture}
\path [rounded corners, dotted, thick, fill=lightgray] (8.45,5.25)--(-0.5, 5.25)--(-0.5,4.15)--(8.45,4.15);
\path [rounded corners, dotted, thick, fill=lightgray] (8.45,3.75)--(-0.5, 3.75)--(-0.5,2.65)--(8.45,2.65);
\path [rounded corners, dotted, thick, fill=lightgray] (8.45,2.25)--(-0.5, 2.25)--(-0.5,1.15)--(8.45,1.15);
\path [rounded corners, dotted, thick, fill=lightgray] (8.45,0.75)--(-0.5, 0.75)--(-0.5,-0.25)--(8.45,-0.25);
	\draw [fill] (0,0) circle [radius=0.05];
	\draw [fill] (2,0) circle [radius=0.05];
	\draw [fill] (4,0) circle [radius=0.05];
	\draw [fill] (6,0) circle [radius=0.05];	
	\draw [fill] (8,0) circle [radius=0.05];
	\draw [fill] (0,1.5) circle [radius=0.05];
	\draw [fill] (2,1.5) circle [radius=0.05];
	\draw [fill] (4,1.5) circle [radius=0.05];
	\draw [fill] (6,1.5) circle [radius=0.05];	
	\draw [fill] (8,1.5) circle [radius=0.05];
	\draw [fill] (0,3) circle [radius=0.05];
	\draw [fill] (2,3) circle [radius=0.05];
	\draw [fill] (4,3) circle [radius=0.05];
	\draw [fill] (6,3) circle [radius=0.05];	
	\draw [fill] (8,3) circle [radius=0.05];
	\draw [fill] (0,4.5) circle [radius=0.05];
	\draw [fill] (2,4.5) circle [radius=0.05];
	\draw [fill] (4,4.5) circle [radius=0.05];
	\draw [fill] (6,4.5) circle [radius=0.05];	
	\draw [fill] (8,4.5) circle [radius=0.05];
	\node [above] at (0,0) {$\mathbf{4}$};
	\node [above] at (0,1.5) {$3$}; 
	\node [above] at (0,3) {$2$};
	\node [above] at (0,4.5) {$1$};
	\node [above] at (2,0) {$8$};
	\node [above] at (2,1.5) {$7$};
	\node [above] at (2,3) {$6$};
	\node [above] at (2,4.5) {$5$};
	\node [above] at (4,0) {$12$};
	\node [above] at (4,1.5) {$11$};
	\node [above] at (4,3) {$10$};
	\node [above] at (4,4.5) {$9$};
	\node [above] at (6,0) {$16$};
	\node [above] at (6,1.5) {$15$};
	\node [above] at (6,3) {$14$};
	\node [above] at (6,4.5) {$13$};
	\node [above] at (8,0) {$20$};
	\node [above] at (8,1.5) {$19$};
	\node [above] at (8,3) {$18$};
	\node [above] at (8,4.5) {$17$};
	\node at (9.5,0) {$\dots$};
	\node at (9.5,1.5) {$\dots$};
	\node at (9.5,3) {$\dots$};
	\node at (9.5,4.5) {$\dots$};
	\draw[ rounded corners, thick] (-0.3,5) rectangle (0.3,2.8);
	\draw [rounded corners, thick] (1.7,5) rectangle (2.3,2.8);
	\draw [rounded corners, thick] (3.7,5) rectangle (4.3,2.8);
	\draw[ rounded corners, thick] (5.7,5) rectangle (6.3,2.8);
	\draw [rounded corners, thick] (7.7,5) rectangle (8.3,2.8);
	\draw [rounded corners, thick] (-0.3,2) rectangle (0.3,-0.2);
	\draw [rounded corners, thick] (1.7,2) rectangle (2.3,-0.2);
	\draw [rounded corners, thick] (3.7,2) rectangle (4.3,-0.2);
	\draw [rounded corners, thick] (5.7,2) rectangle (6.3,-0.2);
	\draw [rounded corners, thick] (7.7,2) rectangle (8.3,-0.2);
	\draw [rounded corners, dashed, thick] (-0.4,4.9)--(-0.4,2.85)--(1.6,1.3)--(2.4,1.3)--(2.4,2.1)--(1.8,2.1)--(0.4,3.185)--(0.4,5.1)--(-0.4,5.1)--(-0.4,4.5); 
	\draw [rounded corners, dashed, thick] (1.6,4.3) rectangle (2.4,5.1);
	\draw [rounded corners, dashed, thick] (-0.4,-0.3) rectangle (0.4,2.1);
	\draw [rounded corners, dashed, thick] (1.6,3.0875)--(2.05,2.7)--(4.45,4.75)--(3.9,5.215)--(1.5,3.165)--(1.6,3.0875);
	\draw [rounded corners, dashed, thick] (3.6,3.0875)--(4.05,2.7)--(6.45,4.75)--(5.9,5.215)--(3.5,3.165)--(3.6,3.0875);
	\draw [rounded corners, dashed, thick] (5.6,3.0875)--(6.05,2.7)--(8.45,4.75)--(7.9,5.215)--(5.5,3.165)--(5.6,3.0875);
	\draw [rounded corners, dashed, thick] (8.45,95/120 + 3.165)--(7.5,3.165)--(8.05,2.7)--(8.45,1/3+2.7);
	\draw [rounded corners, dashed, thick] (1.6,0.0875)--(2.05,-0.3)--(4.45,1.75)--(3.9,2.215)--(1.5,0.165)--(1.6,0.0875);
	\draw [rounded corners, dashed, thick] (3.6,0.0875)--(4.05,-0.3)--(6.45,1.75)--(5.9,2.215)--(3.5,0.165)--(3.6,0.0875);
	\draw [rounded corners, dashed, thick] (5.6,0.0875)--(6.05,-0.3)--(8.45,1.75)--(7.9,2.215)--(5.5,0.165)--(5.6,0.0875);
	\draw [rounded corners, dashed, thick] (8.45,95/120 + 0.165)--(7.5,0.165)--(8.05,-0.3)--(8.45,1/3-0.3);
	\end{tikzpicture}

\vspace{30pt}

$g^0_A$: 
\begin{tikzpicture}
\draw [rounded corners, dashed, thick] (0,0) rectangle (1,0.5);
\end{tikzpicture}
\hspace{80pt}
$g^0_B$: 
\begin{tikzpicture}
\draw [rounded corners, thick] (0,0) rectangle (1,0.5);
\end{tikzpicture}
\hspace{80pt}
$g^0_C$: 
\begin{tikzpicture}
\path [rounded corners, dotted, thick, fill=lightgray] (0,0) rectangle (1,0.5);
\end{tikzpicture}
\caption{Initial information partitions.}
\label{fig:Example}
\end{figure}

Suppose the true state of the world is $4$. This means that both $A$ and $B$ know that the true state can only be either $3$ or $4$, whereas $C$ knows that the true state can only belong to $\left\lbrace 4, 8, 12, \dots \right\rbrace$. If everyone publicly discloses his or her information block, each person will readily learn that the true state is $4$. But can people find out the true state when they disclose less than what they know? To be more concrete, suppose that each person sends messages according to the function $f$, which is defined as follows. For any information block $S\subseteq \mathbb{N}$,
\begin{equation*}
f(S) =
\begin{cases}
k & \text{if } S=\{k\} \text{ for some } k\in \mathbb{N}\\
0 & \text{otherwise}.
\end{cases}
\end{equation*}

In words, when someone knows that the true state is $k$, he or she sends the message $k$, which is a shorthand for the statement ``I know that the true state is $k$''. In all other cases, the message $0$ is sent. The latter is a shorthand for ``I do not know the true state''. This formulation is a variant of the classical puzzle of the muddy children (or unfaithful husbands or red and white hats). See, e.g., \cite{geanakoplos1994Handbook}.

We also assume that communication unfolds in stages. There is no limit on the number of stages and dialogues of transfinite length are feasible. In each stage, messages are privately exchanged. In particular, $B$ sends a message to both $A$ and $C$, and each of $A$ and $C$ sends a message to $B$.

Let's see what happens during the first stage of communication. Since nobody knows the true state, everyone sends the message $0$. Clearly, nobody learns the true state after messages are exchanged, and each person's information block containing $4$ stays the same. However, $B$ can refine his information partition at the end of the first stage of communication. The reason is this. If the true state were $5$ or $6$, $B$'s block containing the true state would be $\{5,6\}$. In addition, $A$ would send the message $5$ to $B$ if the true state were $5$, and she would send a different message, namely $0$, if the state were $6$. In either case, $B$ would learn the true state. Formally, this means that, at the end of the first communication stage, $B$ can split the block $\{5,6\}$ in his information partition into two singletons, $\{5\}$ and $\{6\}$. Differently put, $B$ is able to discriminate between $5$ and $6$ after one round of communication. We emphasize that this updating can be done even when, as we are assuming, the true state is $4$. In fact, a partition represents the information a person has not just at the true state but at every possible state of the world. As a result, people have the following information partitions at the end of the first stage of communication:
\begin{align*}
g^1_A &= g^0_A\\
g^1_B &= \left( g^0_B\backslash \{\{5,6\}\} \right) \cup \left\lbrace \{5\}, \{6\}\right \rbrace\\
g^1_C &= g^0_C.
\end{align*}

In the second stage, everybody sends the message $0$. In addition, $A$ splits her block $\{6,9\}$ into the two singletons $\{6\}$ and $\{9\}$. $C$ splits $\{1+4k_1:k_1\geq 0\}$ into $\{5\}$ and $\{1+4k_1:k_1 = 0 \textsf{ or } k_1 \geq 2\}$, and splits $\{2+4k_2:k_2\geq 0\}$ into $\{6\}$ and $\{2+4k_2: k_2 = 0 \textsf{ or } k_2 \geq 2\}$. Finally, $B$'s information partition stays the same. In the third stage, everybody sends again the message $0$. $B$ splits $\{9,10\}$ into $\{9\}$ and $\{10\}$, whereas $A$ and $C$ do not alter their information partitions. In all subsequent stages indexed by a finite ordinal, this process of communication and learning goes on in a similar fashion. Everybody always sends the message $0$, and, at any stage, either $B$ alone or both $A$ and $C$ can strictly refine their information partitions.

When the stage indexed by the first limit ordinal $\omega$ is reached, each person's information is represented by the coarsest partition that is finer than all the partitions he or she had at previous communication stages. More specifically, we have:
\begin{align*}
g^{\omega}_A &= \left\lbrace \{1,2,7\}, \{3,4\}\right\rbrace \cup \left\lbrace \{5+4k_1\}: k_1 \geq 0\right\rbrace \cup \left\lbrace \{6+4k_2\}:k_2\geq 0\right\rbrace \cup \left\lbrace \{8+4k_3, 11+4k_3\}:k_3\geq 0\right\rbrace\\
g^{\omega}_B &= \left\lbrace \{1, 2\}\right\rbrace \cup \left\lbrace \{3+4k_1, 4+4k_1\}:k_1\geq 0\right\rbrace \cup \left\lbrace \{5+4k_2\}: k_2 \geq 0\right\rbrace \cup \left\lbrace \{6+4k_3\}:k_3\geq 0\right\rbrace\\
g^{\omega}_C &= \left\lbrace \{1+4k_1\}:k_1\geq 0\right\rbrace \cup \left\lbrace\{2+4k_2\}:k_2\geq 0\right\rbrace \cup \left\lbrace \{3+4k_3:k_3\geq 0\}, \{4+4k_4:k_4\geq 0\}\right\rbrace.
\end{align*}

This means that $A$ is able to discriminate between all the states in the first two lines of Figure \ref{fig:Example} except for the states $1$ and $2$, which still belong to $\{1,2,7\}$. Similarly, $B$ can discriminate between all the states in the first two lines except for $1$ and $2$, which still form the block $\{1,2\}$. As to $C$, she can discriminate between all the states in the first two lines of Figure \ref{fig:Example}. Even at this stage, no one knows that the true state of the world is $4$. Consequently, after the first limit ordinal, everyone keeps on sending the message $0$. In the stage indexed by $\omega + 1$, $B$ splits his block $\{1,2\}$ into the two singletons $\{1\}$ and $\{2\}$. In fact, if the true state were $1$ or $2$, $B$ would learn it upon hearing from $C$ at this stage. In the next stage, $A$ splits $\{1,2,7\}$ into $\{1\}$, $\{2\}$ and $\{7\}$, and so on. At the second limit ordinal, all three persons update their information as they did at $\omega$. This leads to the following partitions:
\begin{align*}
g^{\omega\cdot 2}_A &= \left( \{\{k\} : k \geq 1 \} \backslash \{\{3\},\{4\}\} \right) \cup \left\lbrace \{3, 4\}\right \rbrace\\
g^{\omega \cdot 2}_B &= \left( \{\{k\} : k \geq 1 \}\backslash \{\{3\},\{4\}\} \right) \cup \left\lbrace \{3, 4\}\right \rbrace\\
g^{\omega \cdot 2}_C &= \left\lbrace \{k\} : k \geq 1 \right\rbrace.
\end{align*}

At $\omega \cdot 2$, $C$ is finally able to deduce that the true state of the world is $4$. Both $A$ and $B$ are able to discriminate between all states except for $3$ and $4$. Hence they do not know yet if the true state is $3$ or $4$. In the next stage, $C$ sends the message $4$ to $B$, who then sends the same message to $A$. Therefore, at the end of the communication stage indexed by $\omega \cdot 2 +2$, everyone has learned the true state, and the information partitions are:
\begin{equation}\label{eq:Ex-Final-Partitions}
g^{\omega \cdot 2 +2}_A =  g^{\omega \cdot 2 +2}_B= g^{\omega \cdot 2 +2}_C =\left\lbrace \{k\} : k \geq 1 \right\rbrace.
\end{equation}

It follows that, from stage $\omega \cdot 2 +2$ on, everyone will always send the message $4$, and information partitions will clearly remain the same.

Some remarks are in order. First, we can see from \eqref{eq:Ex-Final-Partitions} that everyone eventually shares all his or her information with everyone else. This is not necessarily true in more general cases, and the amount of information shared in a dialogue depends on the expressive power of the function $f$.

Second, everyone agrees before the stage $\omega \cdot 2$, and everyone agrees from $\omega \cdot 2 +2$ on. That is, everyone sends the message $0$ before $\omega \cdot 2$, and everyone sends the message $4$ from $\omega \cdot 2 +2$ on. But there is an important difference between these two cases. When everyone agrees on $0$, there is always a state of the world at which a disagreement would emerge if that state occurred. For example, at the first stage of the dialogue, a disagreement would emerge if the true state were $5$. On the contrary, when everyone agrees on $4$, a consensus is reached not just at the true state but at every possible state of the world. As we show in Section \ref{sec:Results1}, when a consensus holds only at some states, a dialogue enables people to refine their information partitions. On the other hand, when a consensus holds at every possible state, a dialogue does not allow people to refine their information, meaning that there is nothing new to be learned by exchanging messages.

Finally, we remark that the dialogue described so far can also capture observational learning. Specifically, suppose that, in each stage, each person takes an action from the set $D = \mathbb{N} \cup \{0\}$. Stage payoffs are determined by the following function:

\begin{equation*}\label{eq:utilityEx}
	u \left(d, k \right) =
	\begin{cases}
		1 & \text{if } d=k\\
		0 & \text{if } d=0\\
		- 1 & \text{otherwise}.
	\end{cases}
\end{equation*}
In words, taking action $d\neq 0$ in state $k$ yields a reward if the action and the state match. If they don't, the decision maker incurs a loss. In every state, the safe option of choosing $d = 0$ is always available. There are no payoff externalities: the payoff accruing to anyone is independent of what the others do, and vice versa. In each stage, everyone selects an action according to the maximin criterion. For any information block $S \subseteq \mathbb{N}$, the chosen action $\hat{f}(S)$ is
\begin{equation*}
\hat{f}(S) = \max_{d\in D} \min_{k\in S} u(d,k).
\end{equation*}
It is straightforward to verify that $\hat{f}(S) = f(S)$ for every $S$. Now suppose the initial information partitions are again those represented in Figure \ref{fig:Example}. In addition, actions are privately observed. In each stage, $B$ observes the actions chosen by $A$ and $C$, and each of $A$ and $C$ observes $B$'s choice. Assume again that the true state is $4$. In the first stage, everyone chooses $0$. Moreover, $B$ refines his information partition by splitting the block $\{5,6\}$ into $\{5\}$ and $\{6\}$ while $A$ and $C$ do not alter their information partitions. In subsequent stages too, all three persons update their information exactly as they did with $f$. As a consequence, everybody learns the true state at the stage indexed by $\omega \cdot 2 + 2$.

\section{Model}\label{sec:Model1}
\subsection{Setup}
Our object of study is a communication structure $\mathcal{C} = \langle I,X, A, f, G\rangle$, where:
\begin{itemize}
\item  $I=\{1, \dots, n\}$, with $n\geq 2$, is a finite set of agents;
\item $X$ is a non-empty set of states of the world;
\item $A$ is a non-empty set of messages;
\item $f:\mathcal{X} \to A$ is a message function, where $\mathcal{X}$ is the set of non-empty subsets of $X$;
\item $G$ is a directed graph whose set of nodes is $I$. Abusing notation, we write $G$ to indicate both the graph and its set of edges $G\subseteq I\times I$.
\end{itemize}

Agents' information about the true state is represented by partitions. The set of all partitions of $X$ is $\mathcal{P}$, with typical elements $P$, $P'$, etc. Given a state $x\in X$ and a partition $P \in \mathcal{P}$, the block of $P$ containing $x$ is denoted by $P(x)$. The set $\mathcal{P}$ is partially ordered by the binary relation ``is coarser than'', which is denoted by $\leq$ and defined as follows: for any two partitions $P$ and  $P'$, we have $P \leq P'$ if and only if $P$ is a coarsening of $P'$, i.e. every block of $P$ can be written as the union of some blocks of $P'$. We use $P \vee P'$ to denote the join (coarsest common refinement) of $\{P,P'\}$, and $\bigvee \left\lbrace P_h: h \in H\right\rbrace$ for the join of the indexed family $\left\lbrace P_h: h \in H\right\rbrace$. Similarly, we use $\bigwedge \left\lbrace P_h: h \in H\right\rbrace$ to indicate the meet (finest common coarsening) of the family $\left\lbrace P_h: h \in H\right\rbrace$. Recall that $\mathcal{P}$ is a complete lattice.

When agent $i$'s information is represented by $P_i \in \mathcal{P}$, we say that $i$ has information $P_i$. Knowledge is defined in the usual way. Given a state $x\in X$ and an event $E\subseteq X$, we say that agent $i$ knows $E$ in state $x$ if $P_i(x)\subseteq E$. We say that $E$ is common knowledge at $x$ if $\bigwedge \left\lbrace P_i: i \in I\right\rbrace(x)\subseteq E$, where $\bigwedge \left\lbrace P_i: i \in I\right\rbrace(x)$ is the unique block of the meet containing $x$.

\subsection{Messages, communication and learning}
Agents exchange messages in a multi-stage communication process. The message function $f$ determines how agents send messages depending on their information. The graph $G$ determines who sends a message to whom. Both $f$ and $G$ are kept fixed as communication unfolds. We do not make any assumption on the content of messages. In particular, we do not require messages to be posterior beliefs about a fixed event.
\paragraph{Messages.} When $i$ has information $P_i$, we use the function $f_i:X\to A$ to indicate what message $i$ sends at any given state $x$. We refer to $f_i$ as $i$'s message function. Since no confusion should arise, we save on notation and omit the dependence of $f_i$ on $P_i$. We follow \cite{bacharach1985} and make two assumptions about message functions.
\begin{assumption}[Like-mindedness]\label{Assm:L-m} For every $i\in I$, and for every partition $P_i \in \mathcal{P}$, if $i$ has information $P_i$, then $f_i(x) = f(P_i(x))$ for every $x\in X$.
\end{assumption}
Like-mindedness captures the fact that agents share the same view of the world. If any two agents have the same information in a given state, then they must send the same message in that state. Consequently, if different messages are sent, this is solely due to asymmetric information and not to, say, different subjective states or other forms of fundamental disagreement. Notice that, in every state $x$ and for every agent $i$, the message that $i$ sends when $x$ occurs is a function of the smallest event that $i$ knows at $x$, which is $P_i(x)$.

Another implication of Assumption \ref{Assm:L-m} is that, for every $x,x'\in X$, if  $P_i(x)=P_i(x')$, then $f_i(x)=f_i(x')$. This reflects full rationality. If an agent sent different messages in different states belonging to the same information block, then she would realize that these states are not indistinguishable after all and so she would assign them to different information blocks. In addition, every agent always knows the message she is transmitting.

\begin{assumption}[Sure thing principle (STP)]\label{Assm:STP} For any $S\in \mathcal{X}$, and for any partition $\{S_h: h\in H\}$ of $S$, if $f(S_h)=a$ for all $h\in H$ then $f(S)=a$.
\end{assumption}
The STP is also known as \textit{union consistency}\footnote{The message function $f:\mathcal{X}\to A$ is union consistent if $S,S' \in \mathcal{X}$, $S\cap S' = \emptyset$, and $f(S)=f(S')=a$ imply $f(S\cup S')=a$. Strictly speaking, the union consistency of \cite{cave1983} and the STP of \cite{bacharach1985} are equivalent only when information partitions are finite. With possibly infinite partitions, the STP implies union consistency, but the converse is not true.} (\cite{cave1983}). The STP says the following. Take any collection of mutually disjoint events $\{S_h: h\in H\}$. If there exists a message $a \in A$ such that, for every event $S_h$ in the collection, an agent sends $a$ when she knows that the true state is contained in $S_h$, then she must send the same message $a$ when she knows that the true state is in $\cup_{h\in H} S_h$. \cite{bacharach1985} shows that the STP is satisfied, among others, by expected utility maximization and by the maximin criterion \footnote{See \cite{moses1990agreeing} for a critique of the STP in epistemic models, and \cite{samet2010agreeing} and \cite{tarbush2016counterfactuals} for possible ways to address their critique.}. It is easy to verify that the message function $f$ in the example (Section \ref{sec:Example}) satisfies the STP too. \cite{bacharach1985} notes that Hurwicz' $\alpha$-rule does not satisfy the STP when $0 < \alpha < 1$. We remark that the STP might also be violated by ambiguity averse agents. Specifically, suppose an agent has full information and, at every block of her information partition, she announces that her optimal choice is $a$. Under ambiguity aversion, the agent could choose a different action $b$ if she only knew that the state is contained in the union of those blocks\footnote{I thank an anonymous reviewer for suggesting that the STP does not hold under ambiguity aversion.}.

We can now introduce \textit{working partitions} (\cite{weyers2992}). Given a message function $f_i$, we use $W_i$ to denote the corresponding working partition. For every $x\in X$, the block of $W_i$ containing $x$ is $W_i(x):= \{x'\in X : f_i(x')=f_i(x)\}$. In words, $W_i(x)$ is the event ``$i$ sends message $a$'' for some $a\in A$. Therefore, we can interpret $W_i(x)$ as the information that $i$ conveys to any receiver $j\neq i$ by sending message $f_i(x)$ in state $x$. The fact that $W_i$ is a partition reflects the lack of ambiguity about the interpretation of messages. Since no confusion should arise, we save on notation and omit the dependence of $W_i$ on the underlying information partition $P_i$. Finally, notice that $W_i$ is necessarily a coarsening of $P_i$.

\paragraph{Communication  and learning.}
Communication unfolds in stages. We do not put any limit on the length of the communication process. In particular, agents can engage in dialogues of transfinite length. A transfinite protocol is a sequence $\left(G^{\alpha}: \alpha \in \mathsf{Ord} \right)$, where $G^{\alpha} \subseteq I\times I$ for every ordinal $\alpha$. If $(i,j)\in G^{\alpha}$ for some ordinal $\alpha$, there is a directed edge from $i$ to $j$ and this means that $i$ sends a message to $j$ in stage $\alpha$. Throughout the paper, we confine ourselves to protocols such that $G^{\alpha} = G$ for every ordinal $\alpha$. In other words, we consider only stationary protocols induced by the fixed graph $G$.

We are going to use a function $g$ to describe how receivers update their information upon receiving messages. Before formally defining such a function, we introduce a couple of auxiliary objects. Let $\mathcal{P}^n$ be the $n$-fold Cartesian product of $\mathcal{P}$. An element of $\mathcal{P}^n$ is an indexed collection  $\boldsymbol{P} = \left(P_1, \dots, P_n\right)$ of partitions of $X$. The set $\mathcal{P}^n$ is endowed with the following product order:
\begin{equation*}
(P_1, \dots, P_n) \leq (P'_1,\dots, P'_n) \iff P_i \leq P'_i \; \text{ for all } i\in I.
\end{equation*}
Notice that $\mathcal{P}^n$ is a complete lattice too. For each $i\in I$, the (possibly empty) set $S(i):=\left\lbrace j\in I: (j,i)\in G\right\rbrace$ is the subset of agents who send a message to $i$. We are now ready to define the function $g: \mathcal{P}^n \to \mathcal{P}^n$ as follows:
\begin{equation}\label{eq:g}
g_i\left((P_1,\dots,P_n)\right)  =
\begin{cases}
\bigvee \left\lbrace P_i \vee W_j\right\rbrace_{j\in S(i)} & \text{if } S(i)\neq \emptyset\\
P_i & \text{otherwise},
\end{cases}
\end{equation}
where we write $g_i\left((P_1,\dots,P_n)\right)$ to denote the $i$th component of $g\left((P_1,\dots,P_n)\right)$.

In words, $g$ maps profiles of information partitions at the beginning of a given stage of communication to profiles of updated partitions. The updating is done in light of the messages exchanged during that communication stage. If $i$ does not receive any message, her information stays the same. But if she receives a message from $j$, she updates her information by taking the join between her own information partition $P_i$ and $j$'s  working partition $W_j$. If $i$ receives messages from more than one sender, she takes the join between her own information partition and the working partitions of all $j\in S(i)$. The learning process just described is introduced by \cite{parikh-krasucki1990} and then amended by \cite{weyers2992}. In particular, \cite{weyers2992} argues that fully rational agents update not only the partition block containing the true state of the world but their whole information partition.

Given a graph $G$ and a profile $\boldsymbol{P^0}$ of initial information partitions, we define the \textit{dialogue} starting from $\boldsymbol{P^0}$ as the sequence $\left(g^{\alpha}: \alpha \in \mathsf{Ord} \right)$ constructed recursively as follows:
\begin{align*}
g^0&:=\boldsymbol{P^0},\\
g^{\alpha +1}&:=g\left(g^{\alpha}\right) \text{ for every ordinal } \alpha,\\
g^{\lambda} &:= \bigvee \left\lbrace g^{\alpha}: \alpha < \lambda \right\rbrace \text{ for every limit ordinal } \lambda.
\end{align*}

In words, a dialogue is a transfinite sequence in $\mathcal{P}^n$ starting from an initial profile $\boldsymbol{P^0}$ and constructed by iterating the function $g$ induced by $G$. For every ordinal $\alpha$, the element $g^{\alpha}$ of the dialogue tells us what information agents have at the end of the $\alpha$th stage of communication. At every successor ordinal, agents update their information as per \eqref{eq:g}. At every limit ordinal $\lambda$, each agent takes the join of the information partitions she had at all the previous stages $\alpha < \lambda$ of the dialogue.

The initial profile $\boldsymbol{P^0}$ can be thought of as exogenous information, whereas partitions $g^{\alpha}$, with $\alpha > 0$, can be interpreted as endogenous information. Differently put, $\boldsymbol{P^0}$ captures the information conveyed by a privately observed signal about the state. We do not explicitly model this signal and only assume that it induces partitional information. Nature acts only once and chooses the true state of the world $x$, which in turn determines the signal realization $P_i^0(x)$ observed by each $i$. Subsequent information partitions are formed as the above process of communication and learning unfolds.

We define a dialogue as a sequence of partitions and not as a sequence of messages. This allows us to exploit the lattice structure of $\mathcal{P}^n$ when we study dialogues leading to consensus in Section \ref{sec:Results1}. Notice that, for a given message function $f$, a profile $\boldsymbol{P}$ of information partitions fully determines the unique profile of messages transmitted at every state. We conclude this section by emphasizing that the whole structure of the model is common knowledge. In particular, it is commonly known who talks with whom and when, how partition blocks are mapped to messages, and how information is updated.

\section{Results}\label{sec:Results1}
In this section we study conditions under which a rational dialogue eventually leads to consensus. We first give a full characterization of consensus for the static case, i.e. for a fixed profile of information partitions; we then move on to the dynamic case, in which communication is allowed.

\subsection{Consensus}\label{subsec:Consensus}
Let $\boldsymbol{P}\in \mathcal{P}^n$ be a profile of information partitions at a given stage. We say that a \textit{consensus} holds at that stage if, for all $i,j\in I$, we have that $f_i = f_j$. That is, for all $i,j\in I$, $f_i(x)=f_j(x)$ for every state $x \in X$. If agents agree at some state $x$ but not necessarily at every state, then we say that a \textit{partial} consensus holds at $x$. In the example in Section \ref{sec:Example}, the consensus that holds at the true state $4$ is only partial before the stage $\omega \cdot 2$, but then it becomes a full consensus from $\omega \cdot 2 + 2$ on.

Our first result is a characterization of consensus for a given profile of information partitions.
\begin{proposition}\label{prop:Consensus}
Suppose agents have information $\boldsymbol{P}=\left(P_1, \dots, P_n\right)$. Then the following are equivalent:
\begin{enumerate}
\item[a)] For all $x\in X$, the profile of messages that is sent at $x$, i.e. the event \begin{equation*}
E(x)=\left\lbrace x' \in X: f_1(x')= f_1(x), \dots, f_n(x')=f_n(x) \right\rbrace,
\end{equation*}
is common knowledge at $x$;
\item[b)] For all $i,j \in I$, $f_i=f_j$;
\item[c)] For all $i,j \in I$, $W_i=W_j$.
\end{enumerate}
\end{proposition}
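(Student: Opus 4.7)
My plan is to close the triangle via the cycle (c) $\Leftrightarrow$ (b) $\Rightarrow$ (a) $\Rightarrow$ (b). Throughout, the main non-trivial tool will be Assumption \ref{Assm:STP} (STP); like-mindedness and the lattice-theoretic characterisation of $\bigwedge$ are the only other ingredients.

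First I would dispatch (b) $\Leftrightarrow$ (c). The implication (b) $\Rightarrow$ (c) is immediate because each $W_i$ is defined as the level-set partition of $f_i$. For the reverse, fix $x$ and set $B := W_i(x) = W_j(x)$. Since $W_i$ is a coarsening of $P_i$, $B$ is the disjoint union of the $P_i$-blocks it contains; on each such block $P_i(y) \subseteq B$, like-mindedness gives $f(P_i(y)) = f_i(y) = f_i(x)$ (the last equality uses $y \in W_i(x)$). One application of STP then yields $f(B) = f_i(x)$, and the symmetric argument with $P_j$ in place of $P_i$ gives $f(B) = f_j(x)$, so $f_i(x) = f_j(x)$.

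Next I would handle (b) $\Rightarrow$ (a). Assuming $f_i = f_j$ for all $i, j$, the common working partition $W$ satisfies $E(x) = W(x)$ for every $x$. Because $W = W_i$ is a coarsening of $P_i$ for every $i$, $W$ is a common lower bound of $\{P_i\}_{i \in I}$ in the lattice $(\mathcal{P}, \leq)$; the definition of the meet as the greatest such lower bound forces $W \leq \bigwedge_{i \in I} P_i$, whence $\bigwedge_{i \in I} P_i(x) \subseteq W(x) = E(x)$, i.e.\ $E(x)$ is common knowledge at $x$.

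The hard direction is (a) $\Rightarrow$ (b), which is where STP does the real work. Fix $x$ and agents $i, j$, and let $M := \bigwedge_{k \in I} P_k(x)$. Because $\bigwedge_{k} P_k$ is coarser than $P_i$, $M$ is the disjoint union of the $P_i$-blocks it contains. For every $y \in M$, the assumption $M \subseteq E(x)$ gives $f_i(y) = f_i(x)$, and like-mindedness turns this into $f(P_i(y)) = f_i(x)$. STP applied to this partition of $M$ into $P_i$-blocks, all carrying the common $f$-value $f_i(x)$, yields $f(M) = f_i(x)$; rerunning the argument with $j$ in place of $i$ gives $f(M) = f_j(x)$, and hence $f_i(x) = f_j(x)$. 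The only subtlety is that the partition of $M$ into $P_i$-blocks can be of arbitrary cardinality, which is precisely the level of generality at which Assumption \ref{Assm:STP} is stated.
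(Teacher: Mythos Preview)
Your proof is correct and essentially follows the same route as the paper. The paper proves the cycle $a)\Rightarrow b)\Rightarrow c)\Rightarrow a)$, invoking Theorem~3 of \cite{bacharach1985} for $a)\Rightarrow b)$; your argument for $a)\Rightarrow b)$ is precisely Bacharach's proof spelled out, and your $b)\Rightarrow a)$ is the paper's $c)\Rightarrow a)$ (note that you only use the common working partition $W$, i.e.\ condition~$c)$, in that step). The one genuine addition is your separate STP argument for $c)\Rightarrow b)$: it is correct, but redundant once you observe that your ``$b)\Rightarrow a)$'' already establishes $c)\Rightarrow a)$, so that $c)\Rightarrow a)\Rightarrow b)$ closes the cycle without it.
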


\begin{proof}
Firstly, $a) \implies b)$ follows from Theorem 3 in \cite{bacharach1985}. Secondly, $b) \implies c)$ follows immediately from the definition of working partitions. Finally, in order to show $c) \implies a)$, fix a state $x\in X$. By the definition of working partitions, for every $i\in I$, the event $\left\lbrace x' \in X: f_i(x')=f_i(x)\right\rbrace$ is the same as $W_i(x)$. Let $W(x):= \cap_{i\in I} W_i(x)$. Since every $W_i$ is a coarsening of $P_i$, and since $W_i(x)=W_j(x)$ for all $i,j\in I$ by assumption, we have that, for all $i\in I$,
\begin{equation*}
P_i(x) \subseteq W_i(x) = W(x).
\end{equation*}
Therefore, for all $i\in I$,
\begin{equation*}
P_i(x) \subseteq \bigwedge \left\lbrace P_i: i \in I\right\rbrace(x) \subseteq W(x).
\end{equation*}
\end{proof}

Proposition \ref{prop:Consensus} says that consensus is equivalent to an epistemic configuration where, at every state, the profile of messages that are being sent at that state is common knowledge. Furthermore, when messages are commonly known, they must be the same. The latter statement is nothing other than the generalized version of Aumann's agreement theorem established in \cite{bacharach1985}. We can also say that a consensus cannot hold without it being common knowledge that it holds. This is not necessarily true for partial consensus. In the example (Section \ref{sec:Example}), the profile of messages sent at the true state is not common knowledge until a full consensus is reached at the stage $\omega \cdot 2 + 2$.

We remark that consensus does not imply that all agents have the same information partitions. Furthermore, the equivalence in Proposition \ref{prop:Consensus} crucially relies on the STP. As we show in Section \ref{sec:Disc}, if the STP does not hold, one can only conclude that $b)\implies c) \implies a)$. Consequently, it would no longer be impossible to agree to disagree.

The following corollary will prove useful in establishing subsequent results.
\begin{corollary}\label{cor:disagree}
If $f_i\neq f_j$, then $P_i < P_i \vee W_j$ or $P_j < P_j \vee W_i$.
\end{corollary}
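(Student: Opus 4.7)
The plan is to argue by contrapositive: I assume $P_i \vee W_j = P_i$ and $P_j \vee W_i = P_j$, and I will deduce $f_i = f_j$, contradicting the hypothesis $f_i \neq f_j$. In lattice terms the assumption reads $W_j \leq P_i$ and $W_i \leq P_j$.

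The first step will be to lift these inequalities to the meet $P_i \wedge P_j$. A working partition is always a coarsening of the underlying information partition (as noted right after the definition of $W_i$), so $W_i \leq P_i$ holds for free. Combined with the hypothesis $W_i \leq P_j$, this makes $W_i$ a common coarsening of $\{P_i,P_j\}$, and since the meet is the finest common coarsening it follows that $W_i \leq P_i \wedge P_j$. Symmetrically $W_j \leq P_i \wedge P_j$.

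Next I fix an arbitrary $x\in X$ and set $M := (P_i \wedge P_j)(x)$. From $W_i \leq P_i \wedge P_j$ we have $M \subseteq W_i(x)$, so $f_i$ is constant on $M$ with value $f_i(x)$. Because $P_i$ refines $P_i \wedge P_j$, the block $M$ decomposes as a disjoint union of $P_i$-blocks; by Assumption \ref{Assm:L-m} each of these blocks carries $f$-value $f_i(x)$, and a single application of the STP (Assumption \ref{Assm:STP}) to this partition of $M$ yields $f(M) = f_i(x)$. The symmetric argument using the $P_j$-block decomposition of $M$ gives $f(M) = f_j(x)$, and hence $f_i(x) = f_j(x)$. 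Since $x$ was arbitrary, $f_i = f_j$, the desired contradiction. The only place where one could slip is the direction of the lattice order (here $\leq$ means ``coarser than'', so the meet sits below and the join above); past that, the argument reduces to one application of STP on two dual decompositions of $M$.
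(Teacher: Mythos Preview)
Your proof is correct. Both your argument and the paper's start with the same contrapositive setup and deduce that $W_i$ and $W_j$ are common coarsenings of $P_i$ and $P_j$; they diverge only in the final step. The paper observes that this makes $W_i(x)\cap W_j(x)$ common knowledge at every $x$ between $i$ and $j$, and then invokes Proposition~\ref{prop:Consensus} (whose $a)\Rightarrow b)$ direction is Bacharach's theorem) to conclude $f_i=f_j$. You instead work directly with the meet block $M=(P_i\wedge P_j)(x)$, partition it by $P_i$-blocks (respectively $P_j$-blocks), and apply the STP once on each side to get $f_i(x)=f(M)=f_j(x)$. Your route is more self-contained---it does not rely on Proposition~\ref{prop:Consensus} or the external Bacharach reference---and in effect unwinds exactly the STP argument that underlies that proposition in the two-agent case. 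The paper's route is shorter on the page because the heavy lifting has already been packaged into Proposition~\ref{prop:Consensus}.
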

\begin{proof}
By contraposition, suppose that neither $P_i < P_i \vee W_j$ nor $P_j < P_j \vee W_i$ hold. Since it is always the case that $P_i \leq P_i \vee W_j$ and $P_j \leq P_j \vee W_i$, we must have both $P_i = P_i \vee W_j$ and $P_j = P_j \vee W_i$. This implies that $W_i$ is a coarsening of $P_j$ and $W_j$ is a coarsening of $P_i$. Combining this with the fact that each working partition is a coarsening of the underlying information partition, we have that both $W_i$ and $W_j$ are common coarsenings of $P_i$ and $P_j$. Therefore, for every $x\in X$, $W_i(x) \cap W_j(x)$ is common knowledge at $x$ between $i$ and $j$. Thus it follows from Proposition \ref{prop:Consensus} that $f_i=f_j$.
\end{proof}
The interpretation is that, if $i$ and $j$ disagree at some state, then it must be the case that either $i$ can strictly refine her information by receiving a message from $j$, or $j$ can strictly refine her information by receiving a message from $i$, or both. In other words, when two agents disagree, at least one of them can learn some new information from the other.

\subsection{Dialogues leading to consensus}\label{subsec:Dialogues}
We now examine conditions under which dialogues lead to consensus. Given a communication graph $G$ and a profile $\boldsymbol{P^0}$ of initial information partitions, we say that the dialogue $\left(g^{\alpha}: \alpha \in \mathsf{Ord}\right)$ induced by $G$ and starting from $\boldsymbol{P^0}$ leads to a consensus if, for every $i,j\in I$, we have that $f_i^{\alpha}=f_j^{\alpha}$ for some ordinal $\alpha$, where $f_i^{\alpha}$ is $i$'s message function associated with $g_i^{\alpha}$. Differently put, a dialogue leads to consensus if the sequence $\left(g^{\alpha}: \alpha \in \mathsf{Ord}\right)$ contains a profile $g^{\alpha} \in \mathcal{P}^n$ at which everybody agrees.

It is clear that the properties of $G$ affect the function $g$ and, consequently, the sequence $\left(g^{\alpha}:\alpha \in \mathsf{Ord}\right)$. We make the following preliminary observation.
\begin{remark*}
For any $G$, the function $g$ is inflationary but need not be monotone\footnote{Let $P$ be a poset. A function $h:P \to P$ is monotone (or order-preserving) if $x \leq y \implies h(x) \leq h(y)$. We say that $h$ is inflationary (or increasing) if, for all $x\in P$, $x\leq h(x)$.}.
\end{remark*}
\begin{proof}
It follows immediately from \eqref{eq:g} that $g$ is inflationary. The following example shows that $g$ need not be monotone. Let $X=\left\lbrace x,y,w,z\right\rbrace$, $I=\{1,2\}$, and let the message function $f$ be such that $f(\{x\})=f(\{x,y\})=a$, and $f(S) = b$ for any other non-empty subset $S$ of $X$. In addition, suppose the communication graph is $G=\{(1,2), (2,1)\}$. Now take the following elements of $\mathcal{P}^2$:
\begin{align*}
\boldsymbol{P} &= \left(P_1,P_2\right) = \left(\{X\}, \{\{x,y\}, \{w,z\}\}\right)\\
\boldsymbol{P'} &= \left(P'_1,P'_2\right) = \left(\{X\}, \{\{x\}, \{y\}, \{w\}, \{z\}\}\right).
\end{align*}
Thus we have
\begin{align*}
g\left(\boldsymbol{P}\right) &= \left(\{\{x,y\}, \{w,z\}\}, \{\{x,y\}, \{w,z\}\}\right)\\
g\left(\boldsymbol{P'}\right) &= \left(\{\{x\}, \{y,w,z\}\}, \{\{x\}, \{y\}, \{w\}, \{z\}\}\right).
\end{align*}
Therefore, $\boldsymbol{P} \leq \boldsymbol{P'}$ but $g\left(\boldsymbol{P}\right) \not\leq g\left(\boldsymbol{P'}\right)$.
\end{proof}

Since $g$ is inflationary, the sequence $(g^{\alpha}: \alpha \in \mathsf{Ord})$ is increasing. That is, for every $\alpha, \beta \in \mathsf{Ord}$, we have that $\beta < \alpha$ implies $g^{\beta} \leq g^{\alpha}$. This can easily be proved by induction on $\alpha$.

We now introduce the following two subsets of $\mathcal{P}^n$:
\begin{equation*}
\mathsf{Cons}(f):= \left\lbrace \boldsymbol{P}\in \mathcal{P}^n: f_i=f_j \text{ for all } i,j\in I\right\rbrace
\end{equation*}
and
\begin{equation*}
\mathsf{Fix}(g):= \left\lbrace \boldsymbol{P}\in \mathcal{P}^n: g(\boldsymbol{P})=\boldsymbol{P} \right\rbrace.
\end{equation*}

In words, $\mathsf{Cons}(f)$ is the set of partition profiles at which a consensus holds, whereas $\mathsf{Fix}(g)$ is the set of fixed points of $g$. 
\begin{proposition}\label{prop:anyG}
For any $G$, we have $\emptyset \neq \mathsf{Cons}(f) \subseteq \mathsf{Fix}(g)$.
\end{proposition}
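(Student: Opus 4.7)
The plan is to verify the two claims separately, both of which should be straightforward given Proposition \ref{prop:Consensus}. The main conceptual content lies in translating consensus (an equality of message functions) into an order-theoretic statement about partitions, and this has already been done for us in part (c) of Proposition \ref{prop:Consensus}.

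For the non-emptiness of $\mathsf{Cons}(f)$, I would exhibit an explicit element. The simplest candidate is the profile $\boldsymbol{P}$ in which every agent has the trivial partition $\{X\}$ (or, equivalently, any fixed partition shared by all agents). Indeed, by Assumption \ref{Assm:L-m} (like-mindedness), if $P_i = P$ for all $i$, then $f_i(x) = f(P(x)) = f_j(x)$ for every state $x$ and every pair $i,j$, so consensus holds. This works regardless of $G$ and even without invoking the STP.

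For the inclusion $\mathsf{Cons}(f) \subseteq \mathsf{Fix}(g)$, fix $\boldsymbol{P} = (P_1,\dots,P_n) \in \mathsf{Cons}(f)$. By Proposition \ref{prop:Consensus}(c), all working partitions coincide; call the common partition $W$. Recall that each $W_i$ is a coarsening of $P_i$, so $W \leq P_j$ for every $j \in I$. Now evaluate $g$ componentwise. For any $i$ with $S(i) = \emptyset$, $g_i(\boldsymbol{P}) = P_i$ by definition. For any $i$ with $S(i) \neq \emptyset$, each join $P_i \vee W_j = P_i \vee W$ collapses to $P_i$ because $W \leq P_i$, so
\begin{equation*}
g_i(\boldsymbol{P}) \;=\; \bigvee_{j \in S(i)} (P_i \vee W_j) \;=\; P_i.
\end{equation*}
Hence $g(\boldsymbol{P}) = \boldsymbol{P}$, i.e.\ $\boldsymbol{P} \in \mathsf{Fix}(g)$.

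There is no real obstacle here; the argument is entirely mechanical once Proposition \ref{prop:Consensus} is in hand. The only subtlety worth flagging is that the implication relies crucially on part (c) of Proposition \ref{prop:Consensus} — and therefore indirectly on the STP — since without the equality $W_i = W_j$ one could not conclude that every message received is informationally redundant. This reinforces the point made in the paragraph following Proposition \ref{prop:Consensus} that consensus profiles are precisely those at which no further communication along $G$ can refine anyone's information.
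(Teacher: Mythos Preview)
Your proof is correct and follows essentially the same route as the paper: equal-partition profiles witness non-emptiness via like-mindedness, and Proposition~\ref{prop:Consensus}(c) gives $W_i = W_j$, whence $P_i \vee W_j = P_i$ and $\boldsymbol{P}$ is fixed by $g$.

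One small correction to your closing commentary: the implication you actually invoke is $b) \Rightarrow c)$ of Proposition~\ref{prop:Consensus}, and that direction does \emph{not} rely on the STP --- it follows immediately from the definition of working partitions (if $f_i = f_j$ then the level sets of $f_i$ and $f_j$ coincide). The paper notes this explicitly in Section~\ref{sec:Disc}. So Proposition~\ref{prop:anyG} holds even without Assumption~\ref{Assm:STP}; the STP is needed only for the reverse inclusion in Proposition~\ref{prop:equiv}, via Corollary~\ref{cor:disagree}.
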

\begin{proof}
To show that $\mathsf{Cons}(f)$ is non-empty, take any profile $\left(P_1,\dots, P_n\right) \in \mathcal{P}^n$ such that $P_i=P_j$ for every $i,j\in I$. By like-mindedness, all such profiles are contained in $\mathsf{Cons}(f)$.

To show the inclusion $\mathsf{Cons}(f) \subseteq \mathsf{Fix}(g)$, take $\left(P_1,\dots, P_n\right) \in \mathsf{Cons}(f)$. By Proposition \ref{prop:Consensus}, $\left(P_1,\dots, P_n\right)$ is such that $W_i = W_j$ for any $i,j \in I$. Therefore, since $W_i$ is a coarsening of $P_i$, we have that $P_i \vee W_j = P_i$ for any $i,j \in I$. Thus $\left(P_1,\dots, P_n\right)\in \mathsf{Fix}(g)$.
\end{proof}

Since $g$ is inflationary, the non-emptiness of $\mathsf{Fix}(g)$ can also be proved by invoking the Bourbaki-Witt fixed point theorem.

Now, it is clear that if a dialogue $\left(g^{\alpha} : \alpha \in \mathsf{Ord}\right)$ contains a fixed point at $g^\alpha$, then it stays constant at any $\beta > \alpha$. However, it is not necessarily the case that such a dialogue leads to consensus. In order for this to be the case, we need to make sure that the communication process induced by $G$ is sufficiently rich. We thus make the following assumption.
\begin{assumption}\label{Assm:graph}
The communication graph $G$ contains a spanning subgraph\footnote{Recall that a spanning subgraph of $G$ is a subgraph $G'\subseteq G$ with the same set of vertexes as $G$.} $G'$ such that:
\begin{enumerate}
\item[a)] $G'$ is strongly connected: for every distinct $i,j \in I$, there exists a directed path in $G'$ from $i$ to $j$ and a directed path from $j$ to $i$;
\item[b)] $G'$ is symmetric: for every $i,j\in I$, if $(i,j)\in G'$, then $(j,i)\in G'$.
\end{enumerate}
\end{assumption}
Strong connectedness says that no one is excluded from communication, i.e. everyone communicates with everybody else, either directly or indirectly. Symmetry means that communication is reciprocal. When these two conditions are met, the following equivalence holds.
\begin{proposition}\label{prop:equiv}
If $G$ satisfies Assumption \ref{Assm:graph}, then $\mathsf{Cons}(f) = \mathsf{Fix}(g)$.
\end{proposition}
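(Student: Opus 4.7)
The direction $\mathsf{Cons}(f)\subseteq \mathsf{Fix}(g)$ is already contained in Proposition \ref{prop:anyG} and uses no property of $G$, so the plan is to establish the reverse inclusion $\mathsf{Fix}(g)\subseteq \mathsf{Cons}(f)$ under Assumption \ref{Assm:graph}. Fix a profile $\boldsymbol{P}=(P_1,\dots,P_n)\in\mathsf{Fix}(g)$. The starting observation is that being a fixed point of $g$ forces every incoming working partition to be dominated by the receiver's own information: unpacking \eqref{eq:g}, for each $i$ with $S(i)\neq\emptyset$ the equality $P_i=P_i\vee\bigvee_{j\in S(i)}W_j$ implies $W_j\leq P_i$ for every $j\in S(i)$. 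Equivalently, whenever $(j,i)\in G$, the sender's working partition is a coarsening of the receiver's information partition.

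Next I exploit the spanning subgraph $G'\subseteq G$ from Assumption \ref{Assm:graph}. Pick any edge $(i,j)\in G'$; by symmetry $(j,i)\in G'\subseteq G$ as well, so the previous observation gives both $W_i\leq P_j$ and $W_j\leq P_i$. In particular $P_i\vee W_j=P_i$ and $P_j\vee W_i=P_j$, so neither of the strict refinements mentioned in Corollary \ref{cor:disagree} can occur. The contrapositive of that corollary therefore yields $f_i=f_j$ for every pair of $G'$-adjacent agents.

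Finally I use strong connectedness to propagate this equality across all of $I$. For arbitrary $i,j\in I$, pick a directed $G'$-path $i=i_0,i_1,\ldots,i_k=j$. Each consecutive pair $(i_\ell,i_{\ell+1})\in G'$ is $G'$-adjacent in the sense above, hence $f_{i_\ell}=f_{i_{\ell+1}}$ by the previous paragraph, and chaining these equalities gives $f_i=f_j$. Thus $\boldsymbol{P}\in\mathsf{Cons}(f)$, completing the reverse inclusion.

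The argument is essentially bookkeeping once the right pieces are in place; the main conceptual step is realizing that symmetry of $G'$ is exactly what turns the one-sided fixed-point condition ``$W_j\leq P_i$ for incoming $j$'' into the two-sided condition needed to invoke Corollary \ref{cor:disagree}, while strong connectedness is what lets local agreement along edges of $G'$ be bootstrapped into global agreement across $I$. I expect the only subtlety is keeping the direction of edges straight: a receiver at $i$ corresponds to the edge $(j,i)\in G$, so one has to apply symmetry to $G'$ before converting into a constraint on both $i$ and $j$.
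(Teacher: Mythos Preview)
Your proof is correct and follows essentially the same route as the paper's: both use the fixed-point condition together with symmetry of $G'$ to rule out the strict refinements in Corollary~\ref{cor:disagree} along $G'$-edges, and then invoke strong connectedness to globalize agreement. The only cosmetic difference is that the paper argues by contradiction (locating a disagreeing edge along a path), whereas you give the direct contrapositive version, first establishing $f_i=f_j$ for every $G'$-adjacent pair and then chaining along a path.
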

\begin{proof}
By Proposition \ref{prop:anyG}, it is enough to show that $\mathsf{Fix}(g) \subseteq \mathsf{Cons}(f)$. Let $\boldsymbol{P}=\left(P_1,\dots, P_n\right)$ be a fixed point of $g$. Suppose by way of contradiction that $\boldsymbol{P}$ does not induce a consensus. Hence there are distinct $i$ and $j$ in $I$ such that $f_{i}\neq f_{j}$. By strong connectedness, there exists a directed path in $G'\subseteq G$ from $i$ to $j$: that is, for some integer $K\geq 1$, there is a path $i_0, i_1, \dots, i_K$ in $G'$ such that $i_0=i$ and $i_K=j$. Since $i$ and $j$ disagree, this path must contain an edge $(i_k,i_{k+1})$ such that $i_k$ and $i_{k+1}$ disagree, for some $k\in\{0,\dots, K-1\}$. By symmetry, $(i_{k+1},i_k)\in G'$. By Corollary \ref{cor:disagree}, we have 
\begin{equation*}
P_k < P_k \vee W_{k+1} \;\text{ or }\; P_{k+1} < P_{k+1} \vee W_{k},
\end{equation*}
and using this in \eqref{eq:g} we obtain
\begin{equation*}
P_k < g_k(\boldsymbol{P}) \;\text{ or }\; P_{k+1} < g_{k+1}(\boldsymbol{P}),
\end{equation*}
so contradicting the hypothesis that $\boldsymbol{P}$ is a fixed point of $g$.
\end{proof}

In light of Proposition \ref{prop:equiv}, looking for dialogues leading to consensus is the same as looking for fixed points of $g$. Intuitively, we know from Corollary \ref{cor:disagree} that, in case of disagreement between $i$ and $j$, learning can take place in either direction. Assumption \ref{Assm:graph} makes sure that communication between $i$ and $j$ is reciprocal, so that it can never be the case that $i$ and $j$ disagree without having the possibility of exchanging messages between each other. The importance of reciprocity in communication is pointed out by \cite{krasucki1996}, and Example 2 in \cite{parikh-krasucki1990} shows how a consensus may never emerge if one dispenses with it.

We are now ready to state our main result. In what follows, 
we write $\alpha^*$ to denote the least ordinal $\alpha$ such that $g^{\alpha + 1} = g^{\alpha}$.
\begin{theorem*}
Let $\mathcal{C}=\langle I,X, A, f, G\rangle$ be a communication structure satisfying Assumptions \ref{Assm:L-m}-\ref{Assm:graph}. For any profile $\boldsymbol{P^0}$ of initial information partitions, the dialogue $\left(g^{\alpha}:\alpha \in \mathsf{Ord}\right)$ induced by $G$ and starting from $\boldsymbol{P^0}$ leads to a consensus. Furthermore, $\vert \alpha^* \vert\leq n\vert X\vert$.
\end{theorem*}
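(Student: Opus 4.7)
The plan is to invoke Proposition \ref{prop:equiv} — under Assumption \ref{Assm:graph} the fixed points of $g$ coincide with the consensus profiles — and to show the dialogue $(g^{\alpha})$ reaches such a fixed point at some bounded ordinal $\alpha^*$. So the goal reduces to: (i) show the sequence stabilizes, i.e.\ $\alpha^*$ exists, and (ii) bound $|\alpha^*|$ by $n|X|$. Once these are established, $g(g^{\alpha^*}) = g^{\alpha^* + 1} = g^{\alpha^*}$ puts $g^{\alpha^*}$ in $\mathsf{Fix}(g) = \mathsf{Cons}(f)$, delivering the theorem.

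First I would note, by the easy transfinite induction already sketched after the Remark, that since $g$ is inflationary the sequence $(g^{\alpha} : \alpha \in \mathsf{Ord})$ is non-decreasing in the complete lattice $\mathcal{P}^n$. For every ordinal $\beta$ at which the sequence strictly increases, $g^{\beta} < g^{\beta+1}$ forces some coordinate to change: there is an agent $i(\beta) \in I$ with $g^{\beta}_{i(\beta)} < g^{\beta+1}_{i(\beta)}$, and hence a pair $x(\beta), y(\beta) \in X$ lying in the same block of $g^{\beta}_{i(\beta)}$ but in different blocks of $g^{\beta+1}_{i(\beta)}$. Set $\phi(\beta) := (i(\beta), x(\beta), y(\beta)) \in I \times X \times X$.

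The central step — and the main obstacle — is showing that $\phi$ is injective on the collection of strict-refinement stages. Suppose $\beta_{1} < \beta_{2}$ are both strict-refinement stages with $\phi(\beta_{1}) = \phi(\beta_{2}) = (i, x, y)$. Then $x$ and $y$ are in different blocks of $g^{\beta_{1}+1}_{i}$; by monotonicity $g^{\beta_{2}}_{i} \geq g^{\beta_{1}+1}_{i}$, so $g^{\beta_{2}}_{i}$ refines $g^{\beta_{1}+1}_{i}$, and $x, y$ remain in different blocks of $g^{\beta_{2}}_{i}$ — contradicting the requirement that they share a block of $g^{\beta_{2}}_{i}$. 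Injectivity of $\phi$ into the set $I \times X \times X$ forces the strict-refinement stages to form a set (ruling out a proper-class-long strict increase), so the sequence stabilizes and $\alpha^*$ is well defined. Moreover, $|\alpha^*| \leq n \cdot |X|^{2}$. For infinite $X$, cardinal arithmetic gives $|X|^{2} = |X|$ and hence $|\alpha^*| \leq n|X|$. For finite $X$, a sharper count applies: each strict refinement of $g^{\beta}_{i(\beta)}$ strictly increases its (finite) number of blocks, and that number is bounded by $|X|$, so each agent contributes at most $|X|-1$ strict-refinement stages, giving $\alpha^* \leq n(|X|-1) < n|X|$.

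Finally, combining both regimes yields the uniform bound $|\alpha^*| \leq n|X|$, and the argument of the first paragraph closes the proof: $g^{\alpha^*}$ is a fixed point of $g$, so by Proposition \ref{prop:equiv} it is a consensus profile, and the dialogue indeed leads to consensus at stage $\alpha^*$. The stabilization half is standard for inflationary maps on complete lattices (Bourbaki–Witt), but it is the pair-separation trick — rather than any naive cardinality estimate on $\mathcal{P}^n$ itself — that delivers the sharp bound $n|X|$ claimed in the theorem.
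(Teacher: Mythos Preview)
Your proof is correct. The overall architecture matches the paper's: both reduce to finding a fixed point of $g$ (via Proposition~\ref{prop:equiv}), both use that $g$ is inflationary on the complete lattice $\mathcal{P}^n$ to get stabilization, and both then bound the length of the strictly increasing initial segment.

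Where you diverge is in extracting the bound $|\alpha^*|\le n|X|$. The paper appeals to an external result (Lemma~3.1 of \cite{avery2018chains}) stating that any chain in the partition lattice of $X$ has cardinality at most $|X|$; it then injects $\alpha^*+1$ into the disjoint union $\bigsqcup_i \mathsf{Im}_i$ of the coordinate images and reads off $|\alpha^*|\le n|X|$ directly. Your argument is more elementary and self-contained: the pair-separation map $\beta\mapsto (i(\beta),x(\beta),y(\beta))$ into $I\times X\times X$ is a nice trick that avoids any outside lemma, at the cost of landing on $n|X|^2$ and then needing the infinite/finite case split to reduce to $n|X|$. The paper's route is cleaner in that it gets the sharp bound in one stroke and works uniformly for finite and infinite $X$; yours has the advantage of not importing any lattice-theoretic result the reader may not know. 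Either way the content is the same: the length of a strictly increasing chain of partition profiles is controlled by $n|X|$.
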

\begin{proof}
Since $g$ is inflationary, and since $\mathcal{P}^n$ is a complete lattice, by following the same argument as in the proof of Theorem 12.9 in \cite{roman2008lattices} we have that the sequence $\left(g^{\alpha}:\alpha \in \mathsf{Ord}\right)$ starting from $\boldsymbol{P^0}$ is always well-defined, increasing, and contains one, and only one, fixed point  of $g$. By Proposition \ref{prop:equiv}, the dialogue $\left(g^{\alpha}:\alpha \in \mathsf{Ord}\right)$ induces a consensus.

In order to show that $\vert \alpha^* \vert\leq n\vert X\vert$, take the subsequence $\left(g^{\alpha}:\alpha \leq \alpha^* \right)$. Since $\left(g^{\alpha}:\alpha \in \mathsf{Ord}\right)$ is increasing and $g^{\alpha^*}$ is a fixed point of $g$, the subsequence $\left(g^{\alpha}:\alpha \leq \alpha^* \right)$ is strictly increasing, i.e. for all $\alpha, \beta \leq \alpha^*$,
\begin{equation}\label{eq:strict-ref}
\beta < \alpha \implies g^{\beta} \leq g^{\alpha} \; \text{ and } \; g^{\beta} \neq g^{\alpha}.
\end{equation}
Now define the image of $\left(g^{\alpha}:\alpha \leq \alpha^* \right)$ as $\mathsf{Im}:=\left\lbrace g^{\alpha}: \alpha \leq \alpha^*\right\rbrace$. Since $\left(g^{\alpha}:\alpha \leq \alpha^* \right)$ is strictly increasing, $\mathsf{Im}$ is a well-ordered subset of $\mathcal{P}^n$ having order type $\alpha^* +1$. Furthermore, for every $i\in I$, let $\mathsf{Im}_i:=\left\lbrace g_i^{\alpha}: \alpha \leq \alpha^*\right\rbrace$ be the $i$th projection of $\mathsf{Im}$. Notice that $\mathsf{Im}_i$ is a well-ordered chain in $\mathcal{P}$ having order type at most $\alpha^*+1$. By Lemma 3.1 in \cite{avery2018chains}, we have that $\vert \mathsf{Im}_i\vert \leq \vert X\vert$. Now construct a map $\phi : \alpha^* +1 \to \bigsqcup_{i=1}^n \mathsf{Im}_i$, where $\bigsqcup$ denotes disjoint union, as follows. Let $\phi(0) := P^0_1$, and for every $0<\alpha \leq \alpha^*$,
\begin{equation}\label{eq:phi}
\phi(\alpha):= \min_i\left\lbrace g_i^{\alpha}: g_i^{\beta}\leq g_i^{\alpha} \;\text{ and }\; g_i^{\beta}\neq g_i^{\alpha} \; \text{ for all } \; \beta < \alpha\right\rbrace.
\end{equation}
In words, $\phi$ maps each ordinal $\alpha$ less than or equal to $\alpha^*$ to an individual partition $g_i^{\alpha}$ that is a strict refinement of all partitions $g_i^{\beta}$ having index less than $\alpha$. Without loss of generality, in case of multiple individual partitions satisfying \eqref{eq:phi}, we take the one with the lowest (agent) index. It follows from \eqref{eq:strict-ref} that $\phi$ is well-defined and injective. Therefore we have
\begin{equation*}
	\vert \alpha^* \vert \leq \vert \alpha^*+1\vert \leq \Big\vert \bigsqcup_{i=1}^n \mathsf{Im}_i\Big\vert \leq n\vert X\vert.
\end{equation*}
\end{proof}

Clearly, when $X$ is an infinite set, we have $\vert \alpha^* \vert\leq n\vert X\vert = \vert X\vert$.

\section{Discussion}\label{sec:Disc}
\subsection{Related literature}
The paper contributes to the vast literature on common knowledge and agreement initiated by \cite{aumann1976agreeing}, surveys of which can be found in \cite{bonanno1997agreeing} and \cite[Chapter 2]{menager2006thesis}. In particular, we contribute to the strand that investigates rational dialogues and convergence to consensus. \cite{krasucki1996} and \cite{parikh1992infinite} are the closest papers, and below we briefly discuss the connection with them.

\cite{geanakoplos1982} introduce dialogues in a two-agent model with a finite state space where the messages exchanged are posterior beliefs about a fixed event. \cite{bacharach1985} and \cite{cave1983} show that a consensus can be reached not only when people communicate posterior beliefs but also when they communicate the values of any function satisfying a condition akin to the sure thing principle from decision theory. \cite{bacharach1985} considers a two-agent model in which initial information partitions are finite, whereas \cite{cave1983} assumes that the state space is a measure space and that communication is public. \cite{nielsen1984}, \cite{washburn1984asymptotic} and  \cite{bergin1989} study convergence to consensus but they all confine their attention to the case where the state space is a probability space.

Dialogues with private communication between more than two agents are introduced by \cite{parikh-krasucki1990} and further examined in \cite{krasucki1996}, \cite{heifetz1996} and \cite{menager2006consensus}. They all consider models with finite information partitions. The model in our paper is the same as that in \cite{krasucki1996} except for two crucial aspects. While he considers finite partitions and dialogues of finite length, we do not make any of these finiteness assumptions.

Dialogues of transfinite length are studied in \cite{aumann-hart2003} and \cite{parikh1992infinite}. Contrary to this paper, \cite{aumann-hart2003} consider problems in which a simultaneous-move game is played after countably many cheap talk messages have been exchanged. Like this paper, \cite{parikh1992infinite} studies common knowledge acquisition and consensus. However, there are substantial differences with our paper. Most importantly, his model uses Kripke structures and has two agents and a state space that is at most countably infinite. In our paper, we use Aumann (or partition) structures, we have many agents, and we allow the state space to have any cardinality. We remark that, while there is scant literature on transfinite dialogues, there are many papers in which the analysis of interactive knowledge is done on the transfinite domain but without any form of communication between agents. For example, see \cite{lipman1994} and \cite{heifetz-samet1998knowledge}.  

Finally, this paper is also related to \cite{mueller-frank2013}, in which a framework for learning in social networks is provided. The main difference with our paper is that only dialogues having order type $\omega$ are considered by \cite{mueller-frank2013}. As a consequence, his notion of convergence when information partitions are infinite is different from ours.

\subsection{The Sure Thing Principle} Proposition \ref{prop:Consensus} hinges upon the STP. If one dispenses with it, the equivalence $a) \iff b) \iff c)$ breaks down and one can only conclude that $b) \implies c) \implies a)$. Notice that we do not use the STP when we show that $b) \implies c)$ and $c) \implies a)$ in the proof of Proposition \ref{prop:Consensus}. In the following two examples, we show that the converse implications do not necessarily hold. First, let us show that $b)$ does not follow from $a)$ or from $c)$. Set $X=\left\lbrace x,y \right\rbrace$ and let the message function be such that $f(\{x\})=f(\{y\})=a$ and $f(\{x,y\})=b$. Suppose that there are two agents whose information partitions are $P_1 = \left\lbrace \{x\}, \{y\}\right\rbrace$ and $P_2 = \left\lbrace \{x,y\}\right\rbrace$. At every state, it is common knowledge what messages $1$ and $2$ are sending, but clearly $f_1 \neq f_2$. Notice that we also have $W_1 = W_2 = \{X\}$.

Now we show that $c)$ does not follow from $a)$. Set $X=\{x,y,z\}$ and let the message function be such that
\begin{align*}
f(\{x\})=f(\{y\})&=a\\
f(\{z\})=f(\{x,y\})&=b.
\end{align*}
Suppose there are two agents with the following information partitions:
\begin{align*}
P_1&=\left\lbrace \{x\},\{y\}, \{z\}\right\rbrace\\
P_2&=\left\lbrace \{x,y\}, \{z\}\right\rbrace.
\end{align*}
At every state, the profile of messages is common knowledge but $W_1\neq W_2$.

\subsection{Common knowledge of the communication protocol} Our analysis assumes that the way in which a dialogue unfolds is common knowledge. In particular, the communication protocol induced by the graph $G$ is commonly known, and this is crucial in order to have a well-defined learning process. When $i$ receives a message from $j$, she knows exactly whom $j$ talked with in the past and, consequently, she can infer what information $j$ learned from that history even if she does not necessarily know the actual message that $j$ sent to others or received in some state. Differently put, a commonly known communication protocol implies that the informational content of any given message is not ambiguous, so making it possible for people to learn. In the case in which such a common knowledge assumption is relaxed, learning is not well-defined in our framework and convergence to consensus is not guaranteed. In these cases, \cite{koessler2001} and \cite{tsakas-voorneveld2011} show that one needs to enlarge the state space so as to include any possible history of communication. In so doing, uncertainty about the communication structure can be dealt with in the enlarged state space. The fact that we keep the state space fixed throughout a dialogue is a direct consequence of having a commonly known communication protocol.

We also emphasize that communication takes place through a faultless and fully reliable channel. To wit, when $i$ sends a message to $j$, the message is delivered to $j$ with absolute certainty, and it is common knowledge that it is so. We rule out the possibility that a message never reaches the intended recipient and also the possibility that a recipient gets a different message than what was sent by the sender. Should the communication channel be unreliable, we would be in a situation akin to the email game of \cite{rubinstein1989email}, where convergence to consensus is not guaranteed to hold.

\subsection{Non-strategic communication} We assume that communication is not strategic. When strategic motives are introduced, a consensus is not always reached. In a different yet related setting, \cite{anderlini-et-al2011communication} show that, while agents with common interests are able to aggregate their information in a full learning equilibrium, no such an equilibrium can be sustained when interests diverge. However, \cite{ostrovsky2012information}\footnote{I thank an anonymous reviewer for bringing my attention to this paper.} shows that, for a specific class of securities, strategic incentives do not preclude traders in dynamic financial markets from aggregating their private information and reaching a consensus.

\subsection{Necessity of Assumption \ref{Assm:graph}} Strong connectedness and symmetry in Assumption \ref{Assm:graph} are sufficient for the emergence of consensus in a dialogue but they are not necessary. For example, consider an initial profile $\boldsymbol{P^0}$ such that the individual partitions are ordered as follows: $P^0_1 \leq P^0_{2} \leq \cdots \leq P^0_n$. A consensus can be obtained after only one round of communication by letting $n$, who is the most informed agent, send a message to everybody else.

\subsection{Information sharing}
When a consensus is reached, agents do not always have the same information. But if the message function has enough expressive power, then everyone ends up having the same information when a consensus holds. More specifically, suppose the message function $f$ is injective\footnote{I thank an anonymous reviewer for suggesting the connection between injective message functions and information sharing.}. Clearly, the STP is vacuously satisfied. Moreover, each agent's information partition $P_i$ is the same as her working partition $W_i$. And since everyone has the same working partition when a consensus is reached, everyone must have the same information partition too.

\bibliographystyle{plainnat}
\bibliography{biblio}

\begin{thebibliography}{30}
\providecommand{\natexlab}[1]{#1}
\providecommand{\url}[1]{\texttt{#1}}
\expandafter\ifx\csname urlstyle\endcsname\relax
  \providecommand{\doi}[1]{doi: #1}\else
  \providecommand{\doi}{doi: \begingroup \urlstyle{rm}\Url}\fi

\bibitem[Anderlini et~al.(2011)Anderlini, Gerardi, and
  Lagunoff]{anderlini-et-al2011communication}
Luca Anderlini, Dino Gerardi, and Roger Lagunoff.
\newblock Communication and learning.
\newblock \emph{The Review of Economic Studies}, 79\penalty0 (2):\penalty0
  419--450, 2011.

\bibitem[Aumann(1976)]{aumann1976agreeing}
Robert~J Aumann.
\newblock Agreeing to disagree.
\newblock \emph{The annals of statistics}, pages 1236--1239, 1976.

\bibitem[Aumann and Hart(2003)]{aumann-hart2003}
Robert~J Aumann and Sergiu Hart.
\newblock Long cheap talk.
\newblock \emph{Econometrica}, 71\penalty0 (6):\penalty0 1619--1660, 2003.

\bibitem[Avery et~al.(2018)Avery, Moyen, Ruzicka, and
  Simonsen]{avery2018chains}
James~Emil Avery, Jean-Yves Moyen, Pavel Ruzicka, and Jakob~Grue Simonsen.
\newblock Chains, antichains, and complements in infinite partition lattices.
\newblock \emph{Algebra universalis}, 79\penalty0 (2):\penalty0 37, 2018.

\bibitem[Bacharach(1985)]{bacharach1985}
Michael Bacharach.
\newblock Some extensions of a claim of {A}umann in an axiomatic model of
  knowledge.
\newblock \emph{Journal of Economic Theory}, 37\penalty0 (1):\penalty0
  167--190, 1985.

\bibitem[Bergin(1989)]{bergin1989}
James Bergin.
\newblock We eventually agree.
\newblock \emph{Mathematical social sciences}, 17\penalty0 (1):\penalty0
  57--66, 1989.

\bibitem[Bonanno and Nehring(1997)]{bonanno1997agreeing}
Giacomo Bonanno and Klaus Nehring.
\newblock Agreeing to disagree: a survey.
\newblock Working Paper Series 97-18, Department of Economics, University of
  California, Davis, 1997.

\bibitem[Cave(1983)]{cave1983}
Jonathan A~K Cave.
\newblock Learning to agree.
\newblock \emph{Economics Letters}, 12\penalty0 (2):\penalty0 147--152, 1983.

\bibitem[Geanakoplos(1994)]{geanakoplos1994Handbook}
John Geanakoplos.
\newblock Common knowledge.
\newblock In Robert Aumann and Sergiu Hart, editors, \emph{Handbook of game
  theory with economic applications}, volume~2, pages 1437--1496. Elsevier,
  1994.

\bibitem[Geanakoplos and Polemarchakis(1982)]{geanakoplos1982}
John~D Geanakoplos and Heraklis~M Polemarchakis.
\newblock We can't disagree forever.
\newblock \emph{Journal of Economic Theory}, 28:\penalty0 192--200, 1982.

\bibitem[Heifetz(1996)]{heifetz1996}
Aviad Heifetz.
\newblock Comment on consensus without common knowledge.
\newblock \emph{Journal of Economic Theory}, 70\penalty0 (1):\penalty0
  273--277, 1996.

\bibitem[Heifetz and Samet(1998)]{heifetz-samet1998knowledge}
Aviad Heifetz and Dov Samet.
\newblock Knowledge spaces with arbitrarily high rank.
\newblock \emph{Games and Economic Behavior}, 22\penalty0 (2):\penalty0
  260--273, 1998.

\bibitem[Koessler(2001)]{koessler2001}
Fr{\'e}d{\'e}ric Koessler.
\newblock Common knowledge and consensus with noisy communication.
\newblock \emph{Mathematical Social Sciences}, 42\penalty0 (2):\penalty0
  139--159, 2001.

\bibitem[Krasucki(1996)]{krasucki1996}
Paul Krasucki.
\newblock Protocols forcing consensus.
\newblock \emph{Journal of Economic Theory}, 70\penalty0 (1):\penalty0
  266--272, 1996.

\bibitem[Lipman(1994)]{lipman1994}
Barton~L Lipman.
\newblock A note on the implications of common knowledge of rationality.
\newblock \emph{Games and Economic Behavior}, 6\penalty0 (1):\penalty0
  114--129, 1994.

\bibitem[M{\'e}nager(2006{\natexlab{a}})]{menager2006consensus}
Lucie M{\'e}nager.
\newblock {Consensus, communication and knowledge: An extension with Bayesian
  agents}.
\newblock \emph{Mathematical Social Sciences}, 51\penalty0 (3):\penalty0
  274--279, 2006{\natexlab{a}}.

\bibitem[M{\'e}nager(2006{\natexlab{b}})]{menager2006thesis}
Lucie M{\'e}nager.
\newblock \emph{Communication, common knowledge and consensus}.
\newblock PhD thesis, {Universit{\'e} Panth{\'e}on-Sorbonne - Paris I},
  2006{\natexlab{b}}.
\newblock URL \url{https://tel.archives-ouvertes.fr/tel-00174147}.

\bibitem[Moses and Nachum(1990)]{moses1990agreeing}
Yoram Moses and Gal Nachum.
\newblock Agreeing to disagree after all.
\newblock In \emph{Proceedings of the 3rd conference on Theoretical Aspects of
  Reasoning about Knowledge}, pages 151--168, 1990.

\bibitem[Mueller-Frank(2013)]{mueller-frank2013}
Manuel Mueller-Frank.
\newblock A general framework for rational learning in social networks.
\newblock \emph{Theoretical Economics}, 8\penalty0 (1):\penalty0 1--40, 2013.

\bibitem[Nielsen(1984)]{nielsen1984}
Lars~Tyge Nielsen.
\newblock Common knowledge, communication, and convergence of beliefs.
\newblock \emph{Mathematical Social Sciences}, 8\penalty0 (1):\penalty0 1--14,
  1984.

\bibitem[Ostrovsky(2012)]{ostrovsky2012information}
Michael Ostrovsky.
\newblock Information aggregation in dynamic markets with strategic traders.
\newblock \emph{Econometrica}, 80\penalty0 (6):\penalty0 2595--2647, 2012.

\bibitem[Parikh(1992)]{parikh1992infinite}
Rohit Parikh.
\newblock Finite and infinite dialogues.
\newblock In Yiannis Moschovakis, editor, \emph{Logic from Computer Science},
  pages 481--497. Springer, 1992.

\bibitem[Parikh and Krasucki(1990)]{parikh-krasucki1990}
Rohit Parikh and Paul Krasucki.
\newblock Communication, consensus, and knowledge.
\newblock \emph{Journal of Economic Theory}, 52\penalty0 (1):\penalty0
  178--189, 1990.

\bibitem[Roman(2008)]{roman2008lattices}
Steven Roman.
\newblock \emph{Lattices and ordered sets}.
\newblock Springer, 2008.

\bibitem[Rubinstein(1989)]{rubinstein1989email}
Ariel Rubinstein.
\newblock The electronic mail game: Strategic behavior under ``almost common
  knowledge''.
\newblock \emph{The American Economic Review}, pages 385--391, 1989.

\bibitem[Samet(2010)]{samet2010agreeing}
Dov Samet.
\newblock Agreeing to disagree: The non-probabilistic case.
\newblock \emph{Games and Economic Behavior}, 69\penalty0 (1):\penalty0
  169--174, 2010.

\bibitem[Tarbush(2016)]{tarbush2016counterfactuals}
Bassel Tarbush.
\newblock Counterfactuals in ``agreeing to disagree'' type results.
\newblock \emph{Mathematical Social Sciences}, 84:\penalty0 125--133, 2016.

\bibitem[Tsakas and Voorneveld(2011)]{tsakas-voorneveld2011}
Elias Tsakas and Mark Voorneveld.
\newblock On consensus through communication without a commonly known protocol.
\newblock \emph{Journal of Mathematical Economics}, 47\penalty0 (6):\penalty0
  733--739, 2011.

\bibitem[Washburn and Teneketzis(1984)]{washburn1984asymptotic}
RB~Washburn and D~Teneketzis.
\newblock Asymptotic agreement among communicating decisionmakers.
\newblock \emph{Stochastics}, 13\penalty0 (1-2):\penalty0 103--129, 1984.

\bibitem[Weyers(1992)]{weyers2992}
Sonia Weyers.
\newblock Three results on communication, information and common knowledge.
\newblock CORE Discussion Papers 1992028, Universit\'{e} Catholique de Louvain,
  1992.

\end{thebibliography}

\end{document}